\documentclass[preprint,number,sort&compress,12pt]{elsarticle}

\usepackage[utf8]{inputenc}
\usepackage[T1]{fontenc}
\usepackage{amsmath,amssymb}
\usepackage{amsthm}
\usepackage{booktabs}
\usepackage{multirow}
\usepackage{placeins}
\usepackage{graphicx}
\usepackage{tikz}
\usetikzlibrary{arrows.meta,positioning}
\usepackage{microtype}
\usepackage{xcolor}
\usepackage[colorlinks=true,linkcolor=blue!60!black,citecolor=blue!60!black,urlcolor=blue!60!black]{hyperref}
\usepackage{enumitem}
\setlist{itemsep=2pt,topsep=4pt}
\theoremstyle{plain}
\newtheorem{proposition}{Proposition}
\theoremstyle{definition}
\newtheorem{definition}{Definition}

\newcommand{\graft}{\textsc{Graft}}
\newcommand{\dense}{\textsc{Dense}}
\newcommand{\dq}[1]{\ensuremath{d@#1}}

\newcommand{\halfwidth}{\textwidth}\newcommand{\pairgap}{\par\vspace{8pt}}
\newcommand{\src}[1]{ Source: #1.}
\newcommand{\appkernel}{\ref{app:kernel}}

\begin{document}

\begin{frontmatter}

\title{Building Navigable Graphs Without Search\\ in Three Composable Stages}

\author{\'Edgar Ch\'avez}
\ead{elchavez@cicese.edu.mx}
\affiliation{organization={Computer Science Department, CICESE},
              city={Ensenada}, country={M\'exico}}

\begin{abstract}
Navigable graphs can be built without searching for neighbors: partition the
data, evaluate every pair inside each part, and select each point's edges
from the candidates. We give such a construction in three separable stages
and show that the middle one decides the quality. The \emph{pool} is any
partition with a few memberships per point. The \emph{ending} turns a
point's candidates into out-edges; ours keeps a bounded heap, prunes by
occlusion with a per-corpus slack, and appends reverse edges, re-pruning only
where a list overflows. The \emph{spine} is any edge set, exempt from the
prune, that keeps the graph reachable from its entry; ours, half-space-proximal
edges over a random sample, routes monotonically to every sampled point and
replaces a spanning tree at $1/10$ to $1/500$ of its cost. The ending composes with
any partitioner: on PiPNN's own candidate pool it beats PiPNN's ending on
each of six corpora from $10^6$ to $10^8$ points, by $3$ to $14\%$ in
distance evaluations at equal recall, and with 60 to 120 memberships per
point the composed build matches or beats a full dense construction at
$k=10$ and $k=100$ on all six, in $0.5$ to $0.9$ of its build time,
deterministically. The analysis explains why. Once a pool is localised its
quality is set by the data: every pool built on GIST lands within $4\%$ of
the exact-$k$NN ceiling, and the pairs a block cover misses are predicted,
point by point, by the local clustering of the $k$NN graph, whose
zero-clustering tail sets the memberships a corpus needs and grows with
$n$. All code, patches and logs are public.
\end{abstract}

\begin{keyword}
approximate nearest neighbor \sep navigable graphs \sep index construction
\sep $k$-NN graph \sep half-space proximal graph \sep determinism
\end{keyword}

\end{frontmatter}

\section{Introduction}\label{sec:intro}

This paper gives a construction of navigable graphs that never searches.
Partition the points into overlapping blocks; evaluate every pair inside
each block with a dense kernel; keep, for each point, its $C$ nearest
candidates; prune them to at most $R$ out-edges by the occlusion rule with
a slack $\alpha$; append the reverse of every surviving edge and re-prune
only the lists that overflow; and union a spine --- an edge set exempt from
the prune that keeps the graph reachable from its entry; ours is
half-space-proximal edges over a random sample. Three
stages --- the pool, the ending, the spine --- and no stage chooses a
distance to evaluate from the result of a previous one. The construction is
deterministic, and two of its stages are components that attach to any
partitioner, including the fastest one published.

The results are these. On PiPNN's own candidate pool~\cite{rubel2026pipnn},
our ending beats PiPNN's ending on every corpus we measured, by $3$ to
$14\%$ in distance evaluations at equal recall, for $8$--$16\%$ more build
time (Section~\ref{sec:ending}). With 60 to 120 memberships per point the
composed build --- PiPNN's partition, our ending, our spine --- matches or
beats a full dense construction on all six corpora at $k=10$ and $k=100$,
in $0.51$--$0.86\times$ its build time; on Deep-100M it needs $6$, $13$ and
$6\%$ fewer evaluations than the dense build at recall $0.95$, $0.99$ and
$k=100$ in $0.60\times$ its time, and $11$--$23\%$ fewer than Vamana
(Section~\ref{sec:regimes}). Our spine replaces the spanning tree of the
original dense build at $1/10$ to $1/500$ of its distance evaluations, is
never more than $1.1\%$ worse and up to $5\%$ better from $10^6$ to $10^8$,
and sizes itself (Section~\ref{sec:spine}). Every graph in the paper,
including the baselines, is graded by one searcher on one machine.

\emph{Why search-free.} Most graph indexes are built by search: HNSW
inserts points with a beam over the graph so far~\cite{malkov2020hnsw},
NSG and Vamana run a beam over a full graph and
prune~\cite{fu2019nsg,subramanya2019diskann}, ParlayANN batches the
insertions~\cite{manohar2024parlayann}, \graft{} harvests neighborhoods with
a beam over a frozen forest~\cite{chavez2026graft}. A beam's distance
evaluations are dependent random accesses, and a companion
paper~\cite{chavez2026wall} measures that this bounds construction an order
of magnitude below the machine's dense arithmetic, independently of the
algorithm and increasingly so with the dimension. Fixing the pairs before
evaluating them is the alternative, and it is older than the indexes it
competes with: recursive bisection with brute force in the
parts~\cite{chen2009recursive}, several random
partitions~\cite{wang2012scalable}, spanning trees inside random
clusters~\cite{munoz2019hcnng}, PiPNN's recursive random-leader partition
with a dense product per leaf~\cite{rubel2026pipnn}, and on GPUs the $k$NN
graphs of GGNN, CAGRA, cuVS and Tagore reworked into navigable
ones~\cite{groh2023ggnn,ootomo2024cagra,cuvs,li2025tagore}. These builders
are fast. Which of their decisions set the quality of the graph had not
been separated, and separating them is what makes two of the stages
reusable.

\emph{How quality is decided.} The second half of the paper measures each
stage with the other two held fixed. Where the partition is good the pool
is not the lever: with the ending fixed, every pool we built on GIST ---
$k$-means with five memberships, random leaders with thirty, a
shared-neighbor count --- lands within $4\%$ of the graph built from the
exact $k$NN lists (Section~\ref{sec:pool}). What a pool costs is written in
the data: the pairs a block cover misses are predicted, point by point and
in every corpus, by the local clustering coefficient of the $k$NN graph; the
fraction of points with zero clustering orders the corpora by difficulty,
grows with $n$ on the same corpus ($1.1\%$ at $2\cdot10^5$ to $8.1\%$ at
$10^8$ on Deep), and is what the 60 to 120 memberships above are paying
for. Uniform memberships are within $4\%$ of an oracle that knows each
point's clustering. The ending's own two decisions, the order of prune and
symmetrization and the slack against exact candidates, are each worth
$8$--$43\%$ on identical candidates.

\paragraph{Contributions}
\begin{enumerate}
\item \textbf{A reusable ending.} A bounded heap, an occlusion prune with a
  per-corpus slack, and prune-then-symmetrize order. On PiPNN's own pool it
  improves PiPNN's graphs on every corpus ($3$--$14\%$); on exact candidates
  it is $39$--$43\%$ better than Vamana's slack (Section~\ref{sec:ending}).
\item \textbf{A spine without search.} We define the spine by the
  reachability it must guarantee (Definition~\ref{def:spine}) and give an
  instance that needs no search: half-space-proximal edges over a random
  sample, which route monotonically to every sampled point
  (Proposition~\ref{prop:hsp}); with a degree cap of 16 and an automatic
  sample size they replace a spanning tree at $1/10$--$1/500$ of its cost
  (Section~\ref{sec:spine}).
\item \textbf{The composed build.} PiPNN's partition with our ending and
  spine matches or beats a full dense construction on six corpora at both
  $k$, in $0.51$--$0.86\times$ its time, and is bitwise deterministic at
  $10^6$ once three sources of nondeterminism in PiPNN are removed
  (Sections~\ref{sec:regimes} and~\ref{sec:ending}).
\item \textbf{Where quality comes from.} Once a pool is localised its
  quality is set by the ending, not the partitioner (GIST: every pool within
  $4\%$ of the exact-$k$NN ceiling); the cost of a pool is a property of the
  $k$NN graph's clustering, dominated by its zero-clustering tail, which
  grows with $n$; uniform memberships are near-optimal
  (Section~\ref{sec:pool}).
\end{enumerate}

\emph{Scope.} The kernels need distances that decompose into inner
products; for a black-box metric the evaluation count is the cost and
search-based builders are the right design. PiPNN builds $6$--$15\times$
faster than our dense construction, and the composed build is PiPNN's build
plus $8$--$16\%$: the contribution is quality per evaluation and a
component that composes, not a faster partitioner. The clustering result is
correlational evidence over six corpora and three values of $n$
(Section~\ref{sec:pool} states it as a conjecture). Partition-then-brute-force
construction has a long history, cited above; this paper separates its
stages and measures them.

Section~\ref{sec:protocol} defines the construction and the protocol.
Section~\ref{sec:regimes} reports the composed build against the dense
build, Vamana, PiPNN and HNSW. Sections~\ref{sec:ending},~\ref{sec:spine}
and~\ref{sec:pool} measure the ending, the spine and the pool in turn.
Section~\ref{sec:related} places the work, Section~\ref{sec:limits} states
its limits.

\section{The construction}\label{sec:protocol}

The construction takes $n$ points and produces a directed graph of
out-degree at most $R$ with an entry point, in three stages that can be
run, measured and replaced independently. This section defines each
stage, the configuration we recommend, the dense instance and its
kernel, the determinism guarantee, and the protocol under which every
number in the paper was measured.

\subsection{Three stages}

No stage chooses a distance to evaluate based on the result of a previous
one.

\begin{enumerate}
\item \textbf{Pool.} A partition into possibly overlapping blocks. A point's
  \emph{pool} is the union of its blocks; all pairs inside each block are
  evaluated. \dense{} uses $L$ leaders refined by a fixed number of Lloyd
  iterations~\cite{lloyd1982} on a seeded sample, and places each point in
  the leaves of its $m$ nearest leaders; PiPNN uses a recursive random-leader
  partition with $\approx 30$ leaves per point capped at 1,024
  points~\cite{rubel2026pipnn}.
\item \textbf{Ending.} Each point keeps the $C$ nearest of its candidates
  (\dense{}: $C=200$); the list is pruned by the occlusion rule with slack
  $\alpha$ to at most $R$ out-edges, the reverse of every surviving edge is
  appended to its target, and only lists that exceed $R$ are pruned again.
  PiPNN keeps, per leaf, each point's $\textit{mst\_deg}$ nearest (default
  2) in a hash-bucketed reservoir and prunes it with Vamana's $\alpha$.
\item \textbf{Spine.} An edge set unioned after the prune, and exempt from
  it, that guarantees the search can reach the parts of the graph the blocks
  leave disconnected (Definition~\ref{def:spine}). The stage is defined by that
  guarantee, not by a construction: the original \dense{} used one
  spatial-approximation tree (SAT)~\cite{navarro2002sat}; we use
  half-space-proximal (HSP) edges over a random sample
  (Section~\ref{sec:spine}).
\end{enumerate}

\begin{figure*}[!t]
\centering
\resizebox{\textwidth}{!}{%
\begin{tikzpicture}[font=\fontsize{8}{9.6}\selectfont,
  stage/.style={draw, rounded corners=2pt, thick, align=center, minimum height=15mm, text width=27mm, fill=black!4},
  inst/.style={draw, dashed, rounded corners=2pt, align=center, text width=27mm, minimum height=8mm},
  io/.style={align=center, text width=15mm},
  arr/.style={-{Stealth[length=2.2mm]}, thick}]
\node[io] (pts) {$n$ points};
\node[stage, right=7mm of pts] (pool) {\textbf{Pool}\\ overlapping blocks,\\ $m$ per point;\\ all pairs per block};
\node[stage, right=8mm of pool] (cand) {\textbf{Candidates}\\ each point's $C$\\ nearest in its pool};
\node[stage, right=8mm of cand] (end) {\textbf{Ending}\\ occlusion prune ($\alpha$)\\ to $R$; reverse edges;\\ re-prune overflow};
\node[stage, right=8mm of end] (spine) {\textbf{Spine}\\ edges unioned after\\ the prune, exempt\\ from it (Def.~\ref{def:spine})};
\node[io, right=7mm of spine] (g) {graph,\\ degree $\le R$\\ $+$ spine};
\draw[arr] (pts) -- (pool); \draw[arr] (pool) -- node[above]{pairs} (cand);
\draw[arr] (cand) -- (end); \draw[arr] (end) -- (spine); \draw[arr] (spine) -- (g);
\node[inst, above=6mm of pool] (km) {$k$-means leaders,\\ $m=5$ (\dense{})};
\node[inst, above=2mm of km] (pp) {PiPNN partition,\\ 60--120 leaves (composed)};
\draw[arr, dashed] (km) -- (pool);
\node[inst, below=6mm of spine] (hsp) {HSP over a sample $S$\\ (monotone to $S$)};
\node[inst, below=2mm of hsp] (sat) {SAT over all points\\ (reachable)};
\draw[arr, dashed] (hsp) -- (spine);
\node[inst, below=6mm of cand, text width=58mm] (exp) {PiPNN's reservoir, exported before its prune\\ $\to$ our ending (the composed build)};
\draw[arr, dashed] (exp.north) -- ([xshift=0mm]cand.south);
\end{tikzpicture}}
\caption{The construction. No stage chooses a pair to evaluate from the
result of a previous evaluation. Dashed boxes are interchangeable instances:
any partition can supply the pool (our $k$-means leaves, or PiPNN's partition
with its candidates exported before its own prune), and any edge set that
meets Definition~\ref{def:spine} can be the spine.}
\label{fig:pipeline}
\end{figure*}

\emph{Composition.} We patched PiPNN to export its reservoir, exactly
re-ranked, immediately before its final prune (read-only: its own graph is
unchanged); \dense{}'s ending then reads that file as its candidate lists.
This is the \emph{composed} build: PiPNN's pool, our ending and spine.

\emph{Recommended configuration.} For a new corpus: PiPNN's partition with
60 leaves per point and 10 nearest per leaf (120 where the corpus has a
heavy zero-clustering tail, Section~\ref{sec:pool}); the ending with
$C=200$, $R=64$, $\alpha=1$ on rich pools and $1.1$--$1.2$ on thin ones
(Section~\ref{sec:ending}); the HSP spine over a sample sized by the rule of
Section~\ref{sec:spine}, degree cap 16, with the orphan repair. This is the
configuration of Table~\ref{tab:memscale}.

\subsection{The dense instance and its kernel}

The \dense{} candidate stage evaluates every unordered pair once, in its
lowest shared leaf, offers the result to both endpoints' heaps, and runs as a
register-tiled symmetric scan scheduled in round-robin tournament order. On
the 64-thread Xeon used throughout it runs at $2.36$ G pair-evaluations per
second, $53\%$ of the machine's SGEMM rate, and as a fused GEMM-plus-select
on a 6 GB laptop GPU at $7.9$ G/s (\appkernel). With balanced
leaves the candidate stage costs $\tfrac12 n^2m^2/L$ pairs: quadratic at a
fixed fraction of the database, linear at a fixed absolute pool
$P = nm^2/L$, which is how it was run at $10^8$
($L = 24{,}000$, $P\approx 10^5$; the pair count grew $11.1\times$ for
$10\times$ the points).

\subsection{Determinism}

\begin{proposition}[Schedule independence]\label{prop:det}
Fix the data, the seed and the parameters. The \dense{} graph is independent
of the number of threads and of the order in which tiles execute.
\end{proposition}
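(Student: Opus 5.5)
The plan is to show that every stage of the \dense{} build is a deterministic function of (data, seed, parameters). Every place where threads interact is then either absent, or is a reduction whose result does not depend on the order it is applied in. I will walk the pipeline in order: pool, candidate heaps, ending, spine. At each stage I will show that its output is a function of its inputs alone, and never of the schedule.

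\emph{Pool.} The leader sample, the Lloyd iterations and the assignment of each point to its $m$ nearest leaders are computed from the seeded sample. I will require two things here. First, the Lloyd centroid sums are taken in a fixed order, for example per-thread partials over a fixed static partition, combined in index order, or a serial sum. Second, distance ties between leaders are broken by leader index. Under these conditions the leaves are a fixed family of sets.

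\emph{Candidate stage (the crux).} Each unordered pair $\{i,j\}$ is evaluated exactly once, in its lowest shared leaf, which is a property of the leaves and not of the schedule. A pair's distance is computed by a register-tiled inner-product kernel. I will need that its floating-point value depends only on $(x_i,x_j)$ and on a fixed reduction order within the kernel, and not on which tile or thread produced it. This means checking that the tile decomposition fixes the accumulation order along the dimension; I expect this to be the main technical obstacle. The distance is offered to both endpoints' heaps of capacity $C$. Here I will use the key lemma: if candidates are ordered totally by $(d,\text{index})$, the final contents of a bounded top-$C$ heap equal the $C$ smallest elements of the multiset offered to it. That set does not depend on insertion order. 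Concurrent offers must be serialized per heap, with a lock or an owner-computes scheme. Because the result is a set under a total order, any serialization yields the same heap.

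\emph{Ending and spine.} The occlusion prune of one point is a sequential function of its sorted candidate list and of $\alpha$, so it can run in parallel across points. Reverse-edge appending gathers, for each target, the set of sources. I will canonicalize that set by sorting by $(d,\text{index})$ before the overflow re-prune, so append order is irrelevant. The spine sample is drawn from the seed. HSP edge selection is again a per-point sequential rule over a sorted list, or, for the original build, the SAT is built deterministically from the seed. The union of edge sets is order-free. Composing these stage-wise invariances gives the proposition.
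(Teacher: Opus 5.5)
Your proof is correct and is essentially the paper's own sketch. Both arguments rest on the same steps: leaders come from a seeded sample with a fixed summation order; every pair is evaluated exactly once, so each heap receives a fixed multiset of offers; a bounded heap under the total $(\text{distance},\text{id})$ order is an order-independent aggregation of those offers; and the prune and the HSP spine are per-vertex pure functions of sorted lists, so their composition is schedule-independent. Your per-heap serialization, canonically sorted reverse-edge lists and bit-identical per-pair distances make explicit what the sketch leaves implicit. The last of these is not the obstacle you expect: the tile size depends only on the leaf size, so the schedule only decides which thread computes a fixed tile, and when.

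One correction: Lloyd sums formed as \emph{per-thread} partials change bitwise with the number of threads. The partials must therefore be over chunks fixed independently of the thread count, or the sum done serially; that is what the paper's ``fixed summation order'' requires.
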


\begin{proof}[Proof sketch]
The leaders are a deterministic function of the seeded sample and a fixed
summation order; memberships are the $m$ smallest
$(\text{distance},\text{index})$ keys; every pair is evaluated exactly once,
so the multiset of offers to each heap is fixed, and a heap retaining the $C$
smallest $(\text{distance},\text{id})$ keys is a commutative aggregation of
its offers; the prune is a per-vertex pure function of its list in that total
order; the HSP spine is a pure function of the seeded sample and its distance
matrix (Section~\ref{sec:spine}). A composition of pure functions is pure.
\end{proof}

The gate we run is stronger: an adjacency hash and the total distance count
must agree at 1 and 64 threads before any number is recorded; the distance
count has caught a silently skipped parallel loop that left the graph
byte-identical. Stock PiPNN is not deterministic (reruns differ by up to
$\pm3\%$ in query cost); Section~\ref{sec:ending} removes the causes.

\subsection{Protocol}

\emph{Machine.} d0, a 4-socket Xeon E7-4809 v3 (32 cores, 64 threads, AVX2,
1.5 TB), otherwise idle. \emph{Corpora.} GloVe-100 ($n = 1{,}183{,}514$,
cosine)~\cite{pennington2014glove}; SIFT-128 and GIST-960 ($10^6$,
$\ell_2$)~\cite{jegou2011pq}; Deep-96 at $10^7$ and $10^8$
($\ell_2$)~\cite{babenko2016deep1b}; Wikipedia, $6.35\cdot10^6$ BGE-M3
embeddings at $d=1024$ (cosine). \emph{Quality.} \dq{r} is the number of
distance evaluations per query that ParlayANN's beam~\cite{manohar2024parlayann}
needs to reach recall $r$ of the true $k$ nearest neighbors ($k=10$ unless
stated; $k=100$ where marked), interpolated on its beam ladder; every graph is
dumped and graded by this one searcher. QPS on the same ladder is reported
alongside but is not the cost metric: at equal \dq{r} Vamana's graphs run
faster (GloVe $17.9$k against $14.6$k QPS at $\approx 9{,}200$ evaluations),
for reasons --- degree, layout --- we have not measured.
\emph{Baselines.} ParlayANN's Vamana with the authors' recipe (R100/L200
on GloVe, R64/L128 $\alpha = 1.15$ on SIFT, $1.2$ elsewhere, two-pass);
PiPNN with its defaults; hnswlib M16, efC 200~\cite{malkov2020hnsw}, whose
level-0 graph we export and grade on the same searcher --- flat from one entry,
without the hierarchy, so its row is a \emph{lower bound} on hnswlib (its
native recall/QPS ladders are in the logs).

\emph{A methodological note.} An earlier round of our pool comparisons pruned
each point's \emph{whole} candidate row, where the dense build prunes only the
200 nearest. That one difference moved a shared-neighbor pool from
$5{,}510$ to $4{,}518$ evaluations at recall $0.95$ on GIST and reversed its
verdict. All pool comparisons below use the identical 200-nearest ending.

\section{The composed build against dense, Vamana, PiPNN and HNSW}\label{sec:regimes}

With 60 to 120 leaves per point the composed build matches or beats the
full dense construction on every corpus at $k=10$ and $k=100$, in
$0.51$--$0.86\times$ its build time; at PiPNN's default 30 leaves it already
beats PiPNN's own ending everywhere and the dense build on Wikipedia.
Table~\ref{tab:main} puts every system on one searcher at $k=10$ and
$k=100$ at default settings; Table~\ref{tab:builds} gives the build times of
the same runs; Tables~\ref{tab:glovemem} and~\ref{tab:memscale} give the
memberships that close the remaining cases.

\begin{table*}[!t]
\centering\small
\caption{\dq{0.95} at $k=10$ (top) and $k=100$ (bottom), ParlayANN's searcher,
d0, all systems at their default settings (more memberships:
Tables~\ref{tab:glovemem} and~\ref{tab:memscale}). \emph{Composed} = PiPNN's default pool $+$ our ending at the corpus's
slack (deterministic PiPNN at $10^6$--$10^7$; Wikipedia and Deep-100M are the
T1 runs: thin pool on Wikipedia, rich pool on Deep-100M). hnswlib is
its level-0 graph graded flat: a lower bound on hnswlib.\src{\texttt{open-items-FINDINGS.md} (K), \texttt{framing-FINDINGS.md} (T1, T5)}}
\label{tab:main}
\setlength{\tabcolsep}{4pt}
\begin{tabular}{@{}llrrrrr@{}}
\toprule
& corpus & \dense{}$+$HSP & Vamana & PiPNN & composed & hnswlib \\
\midrule
\multirow{6}{*}{\rotatebox{90}{$k=10$}}
& GloVe     & \textbf{9,172} & 9,204 & 13,330 & 11,444 & 12,919 \\
& SIFT      & 1,175 & \textbf{1,106} & 1,230 & 1,165 & 1,184 \\
& GIST      & \textbf{4,268} & 5,281 & 4,556 & 4,418 & 6,440 \\
& Deep-10M  & \textbf{2,133} & 2,150 & 2,431 & 2,277 & 2,278 \\
& Wikipedia & 3,840 & 4,246 & 4,158 & \textbf{3,750} & 4,225 \\
& Deep-100M & \textbf{3,677} & 4,047 & 4,289 & 4,108 & 4,133 \\
\midrule
\multirow{6}{*}{\rotatebox{90}{$k=100$}}
& GloVe     & \textbf{20,228} & 20,619 & 28,676 & 24,418 & --- \\
& SIFT      & 2,422 & 2,978 & 2,571 & \textbf{2,357} & 2,363 \\
& GIST      & 7,509 & 8,318 & 7,603 & \textbf{7,218} & 11,153 \\
& Deep-10M  & \textbf{4,084} & 4,807 & 4,761 & 4,596 & 4,580 \\
& Wikipedia & \textbf{8,030} & 8,666 & 8,688 & 8,713 & 9,629 \\
& Deep-100M & \textbf{7,198} & 7,369 & 8,403 & 7,795 & 8,790 \\
\bottomrule
\end{tabular}
\end{table*}

\begin{table*}[!t]
\centering\small
\caption{Build time in seconds, same runs as Table~\ref{tab:main} (\dense{}:
sum of its stages; PiPNN and Vamana: their reported build; composed: PiPNN's
build minus its prune plus ours; hnswlib M16, efC 200).}
\label{tab:builds}
\setlength{\tabcolsep}{5pt}
\begin{tabular}{@{}lrrrrr@{}}
\toprule
corpus & \dense{}$+$HSP & Vamana & PiPNN & composed & hnswlib \\
\midrule
GloVe     & 86 & 146 & 13 & 22 & 50 \\
SIFT      & 49 & 51 & 11 & 13 & 30 \\
GIST      & 194 & 546 & 41 & 56 & 182 \\
Deep-10M  & 689 & 699 & 114 & 143 & 401 \\
Wikipedia & 2,161 & 4,974 & 271 & 413 & 1,642 \\
Deep-100M & 13,189 & 9,816 & 1,220 & 984 / 1,785 & 5,684 \\
\bottomrule
\end{tabular}
\end{table*}

\paragraph{GloVe: the regime closes with memberships} With PiPNN's default
30 leaves per point the composed build trails the dense one by $25\%$. Keeping
more of each leaf (each point's 10, 20 or 50 nearest per leaf, or 2,048-point
leaves) recovers $5$--$9\%$; giving each point more leaves recovers the rest
(Table~\ref{tab:glovemem}). At 120 leaves per point the composed build beats
the dense build at recall $0.95$ and $0.99$ in $0.87\times$ its build time.
GloVe has the heaviest zero-clustering tail of the corpora
(Table~\ref{tab:trans}), and that tail is paid in memberships, as
Section~\ref{sec:pool} predicts --- not by the dense stage.

\begin{table}[!htbp]
\centering\small
\caption{GloVe: composed build against leaves per point (PiPNN's original
merge for the build times; one run each), our ending at $\alpha = 1$.\src{\texttt{open-items-FINDINGS.md} (G), \texttt{framing-FINDINGS.md} (T3)}}
\label{tab:glovemem}
\begin{tabular}{@{}lrrr@{}}
\toprule
leaves per point & \dq{0.95} & \dq{0.99} & build \\
\midrule
30 (PiPNN default) & 12,284 & --- & 17 s \\
60 & 9,971 & 25,400 & --- \\
90 & 9,539 & 24,306 & 60 s \\
120 & \textbf{8,956} & \textbf{23,832} & 75 s \\
\dense{} $+$ HSP & 9,172 & 24,833 & 86 s \\
\bottomrule
\end{tabular}
\end{table}

\paragraph{At $10^8$ and at $k=100$: the same remedy} With PiPNN's default
memberships the dense build still led in two places: on Deep-100M, where the
best composed build needed $12$--$13\%$ more evaluations at recall $0.95$ and
$7$--$11\%$ more at $0.99$, and at $k=100$, where the dense build was first on
GloVe, Deep-10M, Wikipedia and Deep-100M. Deep's zero-clustering tail grows to
$8.1\%$ at $10^8$ (Table~\ref{tab:deepscale}), so the mechanism that closed
GloVe should apply. It does (Table~\ref{tab:memscale}). On Wikipedia, 120
leaves per point give $2{,}888$ at $k=10$ and $7{,}185$ at $k=100$ against the
dense build's $3{,}840$ and $8{,}030$, at $0.77\times$ its time. On Deep-100M,
60 leaves already beat the dense build at every point measured --- $2\%$ at
recall $0.95$, $9\%$ at $0.99$, $2\%$ at $k=100$ --- at $0.52\times$ its time,
and Vamana by $11\%$ and $23\%$; 120 leaves widen the margin to $6\%$, $13\%$
and $6\%$ at $0.60\times$ the dense build's time. The two corpora where the
dense build led only narrowly close too: GIST at 60 leaves ($4{,}172$,
$8{,}682$ and $7{,}111$ against $4{,}268$, $11{,}098$ and $7{,}509$, in
$0.59\times$) and Deep-10M at 120 ($2{,}093$ and $4{,}345$ against $2{,}133$ and
$4{,}553$, level at $k=100$, in $0.84\times$). PiPNN's own ending
on the same rich pools does not improve ($4{,}437$--$4{,}511$ on Deep-100M
against $4{,}289$ at its default): the gain is the ending on a richer pool,
not the pool alone. The memberships needed follow the tail: 60--120 leaves
for Wikipedia ($2.4\%$) and Deep-100M ($8.1\%$), 120 for GloVe ($12.6\%$ at full $n$, $12.0\%$ in Table~\ref{tab:trans}).

\begin{table*}[!t]
\centering\small
\caption{Memberships: composed build (deterministic PiPNN at 60 or 120 leaves
per point, mst 10, our ending, HSP spine) against the dense build and Vamana.
One run each; Deep-100M's 120-leaf row was run twice and reproduces its
quality to the digit.\src{\texttt{memscale-FINDINGS.md}}}
\label{tab:memscale}
\setlength{\tabcolsep}{4pt}
\begin{tabular}{@{}llrrrr@{}}
\toprule
corpus & build & \dq{0.95} & \dq{0.99} & $k{=}100$ \dq{0.95} & time \\
\midrule
Wikipedia & \dense{} $+$ HSP & 3,840 & --- & 8,030 & 2,161 s \\
          & composed, 30 leaves & 3,750 & --- & 8,713 & 413 s \\
          & composed, 60 leaves & 3,218 & 20,034 & 7,681 & 830 s \\
          & composed, 120 leaves & \textbf{2,888} & 14,822 & \textbf{7,185} & 1,669 s \\
\midrule
Deep-100M & \dense{} $+$ HSP & 3,677 & 8,883 & 7,198 & 13,189 s \\
          & Vamana & 4,047 & 10,547 & 7,369 & 9,816 s \\
          & composed, 30 leaves & 4,108 & 9,516 & 7,795 & 1,785 s \\
          & composed, 60 leaves & 3,591 & 8,077 & 7,033 & 6,880 s \\
          & composed, 120 leaves & \textbf{3,449} & \textbf{7,741} & \textbf{6,772} & 7,885 s \\
\midrule
GIST      & \dense{} $+$ HSP & 4,268 & 11,098 & 7,509 & 194 s \\
          & composed, 60 leaves & \textbf{4,172} & \textbf{8,682} & \textbf{7,111} & 114 s \\
          & composed, 120 leaves & 4,386 & 9,377 & 7,353 & 190 s \\
\midrule
Deep-10M  & \dense{} $+$ HSP & 2,133 & 4,553 & \textbf{4,084} & 689 s \\
          & composed, 60 leaves & 2,157 & 4,586 & 4,194 & 349 s \\
          & composed, 120 leaves & \textbf{2,093} & \textbf{4,345} & 4,090 & 577 s \\
\bottomrule
\end{tabular}
\end{table*}

\paragraph{Build time} On Wikipedia (the most clustered corpus,
$2.4\%$ at zero) and SIFT the composed build matches or beats the dense one at
$k=10$ in $1/4$--$1/5$ of its time; GIST is within $4\%$ in $0.29\times$.

\section{The ending}\label{sec:ending}

The ending is where the quality of a search-free graph is decided, and it
is three decisions: a per-point heap bounded at $C$, an occlusion prune
with slack $\alpha$, and the order in which pruning and symmetrization are
applied. Two of them are worth $8$--$43\%$ on identical candidates
(Section~\ref{sec:order}); the ending as a whole improves PiPNN's graphs on
every corpus when applied to PiPNN's own pool (Section~\ref{sec:compose});
and it makes the composed build deterministic
(Section~\ref{sec:detcomposed}).

\subsection{Order and slack}\label{sec:order}

Two facts about the ending were established on identical candidates. First,
\emph{the order of prune and symmetrization}: symmetrizing every candidate and
pruning the union once, as \graft{} does, costs $8\%$ at recall $0.95$ and
$10\%$ at $0.97$ on GloVe against pruning each list first, appending reverse
edges and re-pruning only overfull lists ($9{,}910/14{,}387$ against
$9{,}130/13{,}031$; the prune stages are $21.6$ s against $10.4$ s). The
in-links of a pruned list are navigation edges another vertex selected;
re-pruning them under a different criterion discards half. Second,
\emph{slack against exact candidates}: Vamana's $\alpha = 1.15$ applied to 200
exact candidates on SIFT costs $39$--$43\%$ ($2{,}249/3{,}101$ against
$1{,}684/2{,}109$ at recall $0.98/0.99$); the slack compensates for candidates
a search may have missed, and there are none to compensate. We use $\alpha=1$
on dense pools.

\subsection{The ending composes with PiPNN}\label{sec:compose}

Table~\ref{tab:alpha} applies the ending to PiPNN's exported reservoir, on
PiPNN's default (\emph{thin}, $\approx 37$--$66$ candidates per point) and on
a richer one where each leaf keeps 10 nearest per point instead of 2
(\emph{rich}, $\approx 128$--$247$).

\begin{table*}[!t]
\centering\small
\caption{Our ending on PiPNN's pool, \dq{0.95}, against PiPNN's own ending on
the same run. HSP spine (Section~\ref{sec:spine}).\src{\texttt{framing-FINDINGS.md} (T2)}}
\label{tab:alpha}
\begin{tabular}{@{}lrrrrrrr@{}}
\toprule
& \multicolumn{4}{c}{thin pool} & \multicolumn{3}{c}{rich pool} \\
corpus & PiPNN & $\alpha\,1.0$ & $1.1$ & $1.2$ & PiPNN & $\alpha\,1.0$ & $1.2$ \\
\midrule
GIST     & 4,635 & 5,229 & \textbf{4,383} & 4,471 & 5,579 & \textbf{4,513} & 7,651 \\
GloVe    & 13,316 & 12,284 & 11,539 & \textbf{11,402} & 13,072 & \textbf{11,963} & 13,468 \\
SIFT     & 1,226 & \textbf{1,167} & 1,317 & 1,372 & 1,362 & \textbf{1,211} & 1,582 \\
Deep-10M & 2,462 & \textbf{2,278} & 2,416 & 2,406 & 2,399 & \textbf{2,328} & 3,059 \\
\bottomrule
\end{tabular}
\end{table*}

On every corpus some slack on PiPNN's \emph{own} pool beats PiPNN's ending:
GIST $-5\%$, GloVe $-14\%$, SIFT $-5\%$, Deep-10M $-7\%$, and at $10^8$ and on
Wikipedia $-3\%$ and $-10\%$ (Section~\ref{sec:regimes}). The slack is not a
function of pool size alone: on rich pools $\alpha=1$ always wins and
$\alpha=1.2$ is harmful (GIST $7{,}651$), while on thin pools the best value is
corpus-dependent --- $1.1$--$1.2$ on GIST and GloVe, $1.0$ on SIFT, Deep and
Wikipedia. The composed build costs PiPNN's build minus its prune plus ours:
$+8$--$16\%$ over the same PiPNN run.

\subsection{A deterministic composed build}\label{sec:detcomposed}

Our ending is deterministic (Proposition~\ref{prop:det}); PiPNN is not, and
neither is anything composed with it. We found three causes and removed them
behind a flag, leaving default PiPNN untouched: (i) the partition seed was
drawn from \texttt{std::random\_device}; (ii) merging undersized buckets used
a hash-based duplicate removal whose output order varies between runs, and
the next level samples leaders \emph{by position}, so the leaves themselves
changed; (iii) the sketch-hash vectors were drawn by a parallel tabulate
through one shared, stateful normal distribution, whose Box--Muller cache
made the hash family depend on scheduling. We also made the reservoir merge
order-free: candidates are totally ordered by (bfloat16 distance, id), a hash
bucket keeps its minimum, a full table evicts its maximum. The composed SIFT
graph is then byte-identical at 64 and 32 threads, at quality inside PiPNN's
run-to-run band (GIST $4{,}556$ against $4{,}548$--$4{,}672$). Keeping the
table's maximum cached across merge calls makes the order-free rule cost
$\approx 3\%$ of PiPNN's build even on 400-entry tables. At $10^8$ two runs
reproduce quality to the digit but not every byte (a $10^{-9}$ residual,
Section~\ref{sec:limits}).

\section{The spine}\label{sec:spine}

A graph built inside blocks can be disconnected, and the prune can make it
so even when the blocks overlap. This section says what a spine must
guarantee (Section~\ref{sec:spinedef}), compares the instances we built
against that guarantee (Section~\ref{sec:spineinst}), and then measures the
one we recommend, HSP edges over a sample (Section~\ref{sec:hsppractice}).

\subsection{What a spine must guarantee}\label{sec:spinedef}

A partition-built graph can shatter. On a synthetic set of 800 Gaussian
clusters, $k$-means leaves coincide with the clusters and an $\alpha=1$ prune
removes every cross-cluster candidate behind a nearer same-cluster one; the
graph is a disjoint union and returns recall $0.002$. More pool does not
help: the cross-leaf candidates are generated and then pruned. Whatever
repairs this must therefore be added \emph{after} the prune and be exempt
from it.

\begin{definition}[Spine]\label{def:spine}
Let $G=(V,E)$ be the graph after the prune, $e\in V$ its entry point and
$T\subseteq V$ a target set. A \emph{spine for $T$} is an edge set $E_S$,
unioned with $E$ after the prune and exempt from it, such that every $t\in T$
is reachable from $e$ in $(V, E\cup E_S)$. It is \emph{monotone for $T$} if,
from $e$ and from every vertex of $T$, greedy routing toward any $t\in T$
along $E_S$ strictly decreases the distance to $t$ at every step and ends at
$t$.
\end{definition}

The definition separates what a spine guarantees (its target set, and
whether reaching it is monotone) from how it is built and what it costs. Two
costs matter: the distance evaluations of its construction, and the degree it
adds to the vertices that carry it, which every beam expanding them pays
(Section~\ref{sec:hsppractice}).

\subsection{Instances}\label{sec:spineinst}

\begin{table}[!htbp]
\centering\footnotesize
\setlength{\tabcolsep}{3pt}
\caption{Spine instances against Definition~\ref{def:spine}. \emph{Recall}:
the clustered set, \dense{} L240/m5, $\mathit{ef}=64$, our searcher;
\emph{cost}: distance evaluations to build it on GloVe/GIST; $L$ leaders.\src{\texttt{adaptive-spine-FINDINGS.md}, \texttt{evp-FINDINGS.md}}}
\label{tab:spineinst}
\begin{tabular}{@{}lllrr@{}}
\toprule
instance & $T$ & guarantee & cost & recall \\
\midrule
none & --- & --- & 0 & 0.002 \\
$k$-nearest leaders & medoids & reachable & $L^2$ & 0.002 \\
\quad $+$ star & medoids & reachable & $L^2$ & 0.035 \\
SAT & $V$ & reachable & 0.99/1.7 G & 0.987 \\
HSP, $|S|=10$k & $S$ & monotone & 0.1 G & 0.993 \\
\quad cap 16 & $S$ & see text & 0.1 G & 0.968 \\
\bottomrule
\end{tabular}
\end{table}

Table~\ref{tab:spineinst} lists what we built.

\emph{Leader graphs} make the leaf medoids reachable, a target set at the
wrong granularity: the synthetic set has far more clusters than leaves, and
the disconnection is between clusters \emph{inside} a leaf. Even at the right
granularity, a beam does not expand a medoid farther from the query than its
frontier, so long edges hung on representatives are not taken.

\emph{The SAT} is a spanning tree over all points; its parent edges, added in
both directions, make every vertex reachable from the root ($T=V$). It is
the one adaptive stage of the original construction --- each point is routed
through the tree as it is built --- and it has a price: $985$ M distances on
GloVe and $1{,}678$ M on GIST, a root of out-degree $483$ and $1{,}133$, and
on GIST a graph that needs $5\%$ more evaluations than without it ($4{,}485$
against $4{,}259$ at recall $0.95$).

\emph{HSP over a sample.} Let $S$ be a seeded random sample containing the
entry point. For each $p\in S$, the half-space-proximal neighbors of $p$ in
$S$~\cite{chavez2005hsp} are found by scanning $S\setminus\{p\}$ in increasing
distance from $p$ and keeping $v$ unless some already-kept $u$ is closer to $v$
than $p$ is (Fig.~\ref{fig:hsp}). The spine is the union of these out-edges.

\begin{proposition}[Monotone routing to the sample]\label{prop:hsp}
For $p, t \in S$, $t \neq p$, either $t$ is an HSP neighbor of $p$ or $p$ has
an HSP neighbor $u$ with $d(u,t) < d(p,t)$. Hence HSP over $S$ is a spine
for $S$ that is monotone for $S$ (Definition~\ref{def:spine}).
\end{proposition}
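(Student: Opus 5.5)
The plan is to prove the one-step claim directly from the HSP selection rule and then obtain monotonicity and reachability by a descent argument on a finite set. Fix $p,t\in S$ with $t\neq p$, and suppose $t$ is not an HSP neighbor of $p$. The scan visits $S\setminus\{p\}$ in increasing distance from $p$, and $t$ is visited at some point. Since $t$ was not kept, the rule says that some already-kept $u$ satisfies $d(u,t)<d(p,t)$. That $u$ is an HSP neighbor of $p$, and it is exactly the witness the statement asks for, so the first sentence of Proposition~\ref{prop:hsp} follows from the definition of the rule alone.

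The one point needing care is ties. Points at equal distance from $p$ must be scanned in a fixed order, for instance by the $(\text{distance},\text{index})$ key used in Proposition~\ref{prop:det}. With that order, ``already kept'' is well defined. The rejection condition is a strict inequality, so the witness gives a strict decrease. I would also note that $u\neq t$, because $u$ was kept and $t$ was not. No metric axiom beyond $d(x,y)>0$ for $x\neq y$ is needed.

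For monotone routing, run greedy routing along $E_S$ toward a fixed $t\in S$, starting from $e$ or from any vertex of $S$. The entry point $e$ lies in $S$ by construction, and every HSP edge joins two points of $S$, so every vertex on the route lies in $S$. At the current vertex $p\neq t$, there are two cases. If $t$ is an out-neighbor of $p$, greedy routing moves to $t$, since $d(t,t)=0$ is minimal. Otherwise the one-step claim supplies an out-neighbor $u$ with $d(u,t)<d(p,t)$, so the greedy choice $\arg\min_{u} d(u,t)$ over $p$'s out-neighbors strictly decreases the distance to $t$.

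Strict decrease means no vertex repeats, and $S$ is finite, so the route terminates. It can stop only at $t$, because at every other vertex a strictly improving move exists. Hence the route ends at $t$ and the spine is monotone for $S$. Reachability of every $t\in S$ from $e$ in $(V,E\cup E_S)$ follows because the route uses only edges of $E_S$. The spine is exempt from the prune, so these edges survive in the final graph, which gives Definition~\ref{def:spine}.

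I expect the main obstacle to be the bookkeeping rather than any inequality. The definition of greedy routing has to be pinned down: it should move to the out-neighbor nearest to $t$, stop at $t$, and break ties deterministically. It must also be stated that the claim concerns the \emph{uncapped} HSP. With the degree cap of 16 a kept witness may be dropped, and the guarantee no longer holds as stated, which matches the ``see text'' entry in Table~\ref{tab:spineinst}.
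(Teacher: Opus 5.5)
Your proof is correct and follows the paper's own argument: a rejected $t$ has a kept eliminator $u$ with $d(u,t)<d(p,t)$, so greedy routing along $E_S$ strictly decreases $d(\cdot,t)$ on the finite set $S$ and can stop only at $t$, and $e\in S$ then gives reachability. Your added bookkeeping is accurate, namely the precise greedy rule and the restriction to the uncapped HSP. The tie-breaking order is not needed for correctness, though, since a rejected $t$ has a kept eliminator under any scan order.
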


\begin{proof}
If $t$ was not kept it was eliminated by a kept $u$ with $d(t,u) < d(t,p)$.
Each step strictly decreases $d(\cdot,t)$ over the finite set $S$, so greedy
routing ends at $t$; the entry is in $S$.
\end{proof}

\begin{figure}[!htbp]
\centering
\begin{tikzpicture}[scale=1.15, font=\fontsize{8}{9.6}\selectfont,
  pt/.style={circle, fill=black, inner sep=1.4pt},
  hollow/.style={circle, draw=black, fill=white, inner sep=1.4pt},
  arr/.style={-{Stealth[length=2mm]}, thick}]
\fill[black!8] (0.83125,-1.2) -- (2.9,-1.2) -- (2.9,1.7) -- (0.03125,1.7) -- cycle;
\draw[dashed] (0.83125,-1.2) -- (0.03125,1.7);
\node[font=\scriptsize, align=center] at (2.35,0.25) {closer to $u$\\ than to $p$};
\node[pt, label=below left:$p$] (p) at (0,0) {};
\node[pt, label=below:$u$] (u) at (1.0,0.25) {};
\node[pt, label=above left:$w$] (w) at (-0.8,1.1) {};
\node[hollow, label=right:$t$] (t) at (2.3,1.1) {};
\node[hollow, label=below:$v$] (v) at (1.45,-0.8) {};
\draw[arr] (p) -- (u); \draw[arr] (p) -- (w);
\draw[arr, densely dotted] (u) -- (t);
\end{tikzpicture}
\caption{HSP neighbors of $p$ within the sample. Scanning by distance from
$p$, $u$ is kept and eliminates every point closer to $u$ than to $p$ (shaded
half-plane), among them $v$ and $t$; $w$ is kept. Routing from $p$ toward $t$
steps to $u$, which is closer to $t$ than $p$ is
(Proposition~\ref{prop:hsp}).}
\label{fig:hsp}
\end{figure}

The guarantee is to the \emph{sample}, not to a query, and not to vertices
outside $S$: from the entry every region containing a sample point is
reached monotonically, and the sample point's own edges in $G$ carry the
search into its region. What makes that enough is coverage, and the
clustered set measures it: with $|S| = 1{,}000$ about $29\%$ of the 800
clusters hold no sample point ($e^{-1.25}$) and recall is $0.531$; with
$|S| = 10{,}000$ every cluster is hit and recall is $0.993$, above the tree's
$0.987$. The cost is $|S|^2$ distances, computed as a matrix and swept
row-parallel: no search.

\emph{Other instances} fit the definition and were not built: the monotonic
relative neighborhood graph of NSG~\cite{fu2019nsg} over $S$, which is also
monotone for $S$; a forest of shallow trees, which keeps $T=V$ while
dividing the root's degree; or the upper layers of a hierarchical graph built
over $S$, which would reintroduce search into the spine only.

\subsection{The HSP spine in practice}\label{sec:hsppractice}

\paragraph{The price of a spine is its degree} HSP degree is not bounded in
high dimension. The median out-degree on real data grows as a power of the
sample size up to $10^5$ with no saturation (Table~\ref{tab:hspdeg}), and the
sample points' extra out-edges are paid by every beam that expands them.
Capping each row at its first $c$ HSP neighbors (Table~\ref{tab:cap}) removes
the price: capped at 8 or 16, every sample size up to $30$k is within $\approx
1\%$ of no spine, where uncapped spines cost $+6\%$ at $30$k and $+18$--$26\%$
at $100$k. The cap has a floor set by the guarantee: at a cap of 8 the
clustered set falls to $0.66$ recall at $|S|=10$k, because the dropped far
neighbors are exactly the eliminators Proposition~\ref{prop:hsp} uses, so a
capped HSP spine is no longer monotone for $S$ by construction; at 16
it keeps $0.968$ at $10$k and $0.998$ at $30$k.

\begin{table*}[!t]
\begin{minipage}[t]{\halfwidth}
\centering\footnotesize
\caption{Median HSP out-degree against sample size (uncapped).\src{\texttt{adaptive-spine-FINDINGS.md} section 3}}
\label{tab:hspdeg}
\begin{tabular}{@{}lrrrrrr@{}}
\toprule
$|S|$ & 1k & 3k & 10k & 30k & 100k & fit \\
\midrule
GloVe ($d=100$) & 19 & 28 & 41 & 57 & 81 & $\propto |S|^{0.31}$ \\
GIST ($d=960$) & 9 & 12 & 17 & 23 & 32 & $\propto |S|^{0.27}$ \\
clustered ($d=128$) & 20 & 24 & 28 & 31 & --- & saturating \\
\bottomrule
\end{tabular}
\end{minipage}\pairgap
\begin{minipage}[t]{\halfwidth}
\centering\footnotesize
\caption{Spine price, \dq{0.95}, \dense{} build, by sample size and degree
cap. No spine: GloVe 9,130, GIST 4,259; SAT: 9,069 and 4,485.\src{\texttt{adaptive-spine-FINDINGS.md} section 3}}
\label{tab:cap}
\begin{tabular}{@{}lrrrr@{}}
\toprule
& \multicolumn{2}{c}{GloVe} & \multicolumn{2}{c}{GIST} \\
$|S|$ & uncapped & cap 16 & uncapped & cap 16 \\
\midrule
1k & 9,138 & 9,121 & 4,249 & 4,246 \\
10k & 9,337 & 9,172 & 4,331 & 4,268 \\
30k & 9,420 & 8,961 & 4,501 & 4,298 \\
100k & 10,743 & --- & 5,349 & --- \\
\bottomrule
\end{tabular}
\end{minipage}
\end{table*}

\paragraph{Sizing the sample without a cluster count} The cluster count is
unknown on real data. We start at $|S| = \max(1024, \sqrt n)$ and double until
a round adds fewer than $n/10^4$ vertices reachable from the entry, measured
by a breadth-first search over the final graph. A first version waited for
\emph{every} vertex to be reachable and never stopped on GIST: the prune
leaves some vertices with no in-edge at all --- $0.19\%$ on GIST, $0.009\%$ on
Deep-10M, none on GloVe and SIFT --- and a spine among sample points cannot
reach them. We give each such orphan an in-edge from its own nearest kept
out-neighbor, which costs no distance evaluation. The rule then picks
$1{,}024$--$3{,}162$ on the real corpora and $8{,}192$ on the clustered set,
at no-spine quality (GIST $4{,}290$, GloVe $9{,}135$, SIFT $1{,}085$,
Deep-10M $2{,}151$). The repair does not change quality, and it does not
explain one gap that remains: at the top of the beam ladder the HSP-spined
GIST graph reaches recall $0.9962$, the SAT-spined one $0.9977$.

\paragraph{Every dense table, $10^6$ to $10^8$} Table~\ref{tab:t5} replaces
the tree by the HSP spine ($|S| = 10$k, cap 16) in every \dense{} build.

\begin{table*}[!t]
\centering\footnotesize\setlength{\tabcolsep}{3pt}
\caption{\dense{} with SAT against HSP spine, \dq{0.95} and \dq{0.99}; spine
cost in distances.\src{\texttt{framing-FINDINGS.md} (T5)}}
\label{tab:t5}
\begin{tabular}{@{}lrrrrr@{}}
\toprule
corpus & SAT \dq{0.95} & HSP \dq{0.95} & SAT \dq{0.99} & HSP \dq{0.99} & spine, HSP/SAT \\
\midrule
GloVe & 9,069 & 9,172 ($+1.1\%$) & --- & 24,833 & 100 M/985 M \\
SIFT & --- & 1,175 & 2,155 & 2,116 ($-1.8\%$) & 100 M/--- \\
GIST & 4,485 & 4,268 ($-4.8\%$) & --- & 11,098 & 100 M/1,678 M \\
Deep-10M & 2,180 & 2,133 ($-2.2\%$) & 4,811 & 4,553 ($-5.4\%$) & 100 M/--- \\
Wikipedia & 3,868 & 3,840 ($-0.7\%$) & --- & --- & 100 M/--- \\
Deep-100M & 3,775 & 3,677 ($-2.6\%$) & 8,831 & 8,883 ($+0.6\%$) & 100 M/56.7 G \\
\bottomrule
\end{tabular}
\end{table*}

Never more than $1.1\%$ worse than the tree, $2$--$5\%$ better on four
corpora; on Deep-100M the build drops from 4:13 to 3:49 hours and its peak
memory from 327 to 267 GB. With the HSP spine the construction has no
adaptive stage at all.

\section{The pool}\label{sec:pool}

This section explains the results above from the data. Once a pool is
localised, which pairs it evaluates matters less than what the ending does
with them (Section~\ref{sec:notlever}); what a pool must cost to cover the
$k$NN graph is set by that graph's clustering, and the tail of unclustered
points is what the memberships of Section~\ref{sec:regimes} are buying
(Section~\ref{sec:clustering}).

\subsection{Where the partition is good, the pool is not the lever}\label{sec:notlever}

The cleanest test of a pool is to give the identical ending the best
possible one: each point's \emph{exact} $k$NN list. Table~\ref{tab:ceiling}
compares that ceiling with every pool we built on GIST.

\begin{table}[!htbp]
\centering\small
\caption{GIST-960: pools against the exact-$k$NN ceiling, identical ending
(200-nearest heap, $\alpha=1$, cap 64, SAT spine), ParlayANN's searcher.
\emph{Pool} is the mean number of candidates a point sees.\src{\texttt{capped-}, \texttt{random-}, \texttt{pipnn-FINDINGS.md}}}
\label{tab:ceiling}
\begin{tabular}{@{}lrr@{}}
\toprule
pool & pool/pt & \dq{0.95} \\
\midrule
exact $k$NN lists (ceiling) & --- & 4,601 \\
$k$-means L700, $m=5$ & 53k & 4,485 \\
$k$-means L2500, $m=5$ & 18k & 4,686 \\
random leaders L90000, $m=30$ & 46k & 4,419 \\
shared-neighbor count, top 4,000 & 4k & 4,518 \\
\midrule
Vamana (search-built) & --- & 5,281 \\
PiPNN (own pool and ending) & 14k & 4,548 \\
\bottomrule
\end{tabular}
\end{table}

Every pool lands between $4\%$ below and $2\%$ above the ceiling, three of them below it: on
GIST the quality is set by the ending, not by which pairs are evaluated.
Pool \emph{recall} still follows the measure of the region scanned --- on
GIST $0.58$ of true 10-NN pairs at a 20k pool, $0.72$ at 36k, $0.80$ at 50k,
whatever its shape (permutation tries, Voronoi cell neighborhoods,
HSP-adjacent cells, per-point memberships) --- but beyond $\approx 0.8$ the
ending cannot use more. GloVe is different: its ceiling is $8{,}652$ against
$9{,}069$ for the $k$-means pool, a $5\%$ gap, and its pool matters.

\subsection{The cost of a pool is a property of the $k$NN graph}\label{sec:clustering}

Covering the $k$NN graph $G_k$ with blocks is an edge-clique cover: a true
pair $(p,q)$ is covered if $p$ and $q$ share a block, and the cost is
$\sum_j |B_j|^2$. If $G_k$ were a disjoint union of cliques the cost would be
$\approx nk$. We measured how far each corpus is from that, on a sample of
2,000 points and the exact $k$NN lists of their neighbors, and the cost of a
symmetric $k$-means cover (cells of 1,400) as the number of memberships
$m@\rho$ needed to cover a fraction $\rho$ of true 10-NN pairs.

\begin{table*}[!t]
\centering\footnotesize
\caption{Transitivity of the exact 10-NN graph and the memberships a block
cover needs, all corpora subsampled to $n = 200{,}000$. \emph{cc} is the local
clustering coefficient.\src{\texttt{transitivity-FINDINGS.md}}}
\label{tab:trans}
\begin{tabular}{@{}lrrrrrrr@{}}
\toprule
corpus & LID & recip. & mean cc & cc $=0$ & $m@0.85$ & $m@0.90$ & $m@0.95$ \\
\midrule
Wikipedia & 35.3 & 0.517 & 0.162 & 2.4\% & 1.77 & 1.99 & 2.81 \\
SIFT      & 19.7 & 0.332 & 0.137 & 0.7\% & 2.01 & 2.48 & 2.95 \\
GIST      & 49.9 & 0.074 & 0.076 & 4.3\% & 2.95 & 3.53 & 4.38 \\
GloVe     & 35.7 & 0.312 & 0.106 & 12.0\% & 3.30 & 4.24 & 5.91 \\
\bottomrule
\end{tabular}
\vspace{6pt}
\centering\footnotesize
\caption{Deep at four sizes, fixed cell size 1,400.\src{\texttt{framing-FINDINGS.md} (T4)}}
\label{tab:deepscale}
\begin{tabular}{@{}rrrrrr@{}}
\toprule
$n$ & $L$ & mean cc & cc $=0$ & $m@0.85$ & $m@0.85/\ln L$ \\
\midrule
$2\cdot10^5$ & 143 & 0.144 & 1.1\% & 1.95 & 0.39 \\
$10^6$ & 714 & 0.122 & 2.7\% & 2.49 & 0.38 \\
$10^7$ & 7,143 & 0.096 & 5.7\% & 3.61 & 0.41 \\
$10^8$ & --- & 0.090 & 8.1\% & --- & --- \\
\bottomrule
\end{tabular}
\end{table*}

Three observations (Table~\ref{tab:trans}).
\emph{The overpay is structural}: mean clustering is $0.08$--$0.17$ and a
point's neighbors with their neighbors span $7.7$--$9.1$ neighborhoods of a
possible 11.
\emph{Clustering predicts which pairs are missed, point by point}: within
every corpus, a point's cover rises monotonically across clustering quartiles
(16 of 16 settings), and Spearman's correlation of cover with clustering
($+0.09$ to $+0.61$) exceeds that with reciprocity or local intrinsic
dimension in every case; on GloVe at $m=2$ the least-clustered quarter is
covered $0.32$, the most-clustered $0.86$.
\emph{Across corpora the tail decides}: the mean misorders GloVe and GIST,
the fraction with zero clustering does not, and the memberships needed at
high cover follow it.

\emph{The tail grows with scale.} On one corpus (Table~\ref{tab:deepscale})
the zero-clustering fraction rises from $1.1\%$ to $8.1\%$ over three decades
of $n$ and the memberships needed grow as $\approx 0.4\ln L$: scale makes a
corpus GloVe-like. Full Wikipedia ($6.35\cdot10^6$) is the most clustered
corpus we measured (mean $0.206$, $2.4\%$ at zero).

\paragraph{Uniform memberships are near-optimal} If the misses concentrate
on low-clustering points, give those points more memberships. We tested it
with the exact clustering as an oracle and with two build-time proxies
(clustering on a pilot $m=2$ graph; leader margin), over two-level policies
(the hardest fraction $f\in[0.05,0.5]$ gets $m_{hi}$, the rest $m_{lo}$). The
best policy at the uniform-$m=5$ pair budget saves $3$--$4\%$ of pairs on
GloVe and nothing on GIST; end to end every policy is inside the $\pm3\%$
band; and $22\%$ more pairs buy nothing. A hard point's extra memberships
enlarge those leaves for every member, and pairs are quadratic in leaf size:
the cost of covering the $k$NN graph is spread over the corpus.

\paragraph{The conjecture} The pairs a block cover needs for a given
cover $\rho$ are bounded below by a functional of the $k$NN graph's
clustering distribution, dominated by its zero-clustering tail. We have
correlational evidence over six corpora and three values of $n$, not a
proof.

\section{Related work}\label{sec:related}

\paragraph{Graph indexes built by search} From navigable small
worlds~\cite{malkov2014nsw} through HNSW~\cite{malkov2020hnsw},
NSG~\cite{fu2019nsg}, FANNG~\cite{harwood2016fanng} and
Vamana/DiskANN~\cite{subramanya2019diskann} to ParlayANN~\cite{manohar2024parlayann}
and \graft{}~\cite{chavez2026graft}, a beam supplies each point's candidates;
the cost model that bounds them is measured in~\cite{chavez2026wall}, after
Hong and Kung~\cite{hong1981io} and the roofline~\cite{williams2009roofline}.

\paragraph{Partition then brute force} Recursive Lanczos
bisection~\cite{chen2009recursive}, multiple random
divisions~\cite{wang2012scalable}, HCNNG~\cite{munoz2019hcnng}, PiPNN~\cite{rubel2026pipnn}
and graph-partitioned indexes~\cite{gottesburen2025partition} all evaluate
pairs inside parts; SPANN partitions for disk-resident
search~\cite{chen2021spann}. On GPUs, GGNN~\cite{groh2023ggnn},
CAGRA~\cite{ootomo2024cagra}, cuVS all-neighbors~\cite{cuvs},
Tagore~\cite{li2025tagore} and FAISS~\cite{johnson2021gpu,faiss} build $k$NN
graphs by blocked products and rework them. Our construction is a member of
this family; the contribution here is the separation of its stages and the
measurement of each, not the construction. ParlayANN's builders are
deterministic by design; what we add is the determinism of a composed build
with a partitioner that was not.

\paragraph{Refinement} NN-descent~\cite{dong2011nndescent},
EFANNA~\cite{fu2016efanna}, SPTAG~\cite{sptag2018} and RNN-Descent~\cite{ono2023rnndescent}
refine a graph by batched rounds; Yang et al.\ identify candidate acquisition
as the dominant construction cost and accelerate RNG pruning~\cite{yang2025revisiting}.

\paragraph{Pruning and connectivity} The occlusion rule descends from the
relative neighborhood graph~\cite{toussaint1980rng,arya1993rng}; the HSP
graph~\cite{chavez2005hsp} and the SAT~\cite{navarro2002sat} are the two
structures we use as spines. The intrinsic-dimension dependence of search
work is surveyed in~\cite{aumuller2021lid,chavez2001searching}; quantisers
that compress the scan side~\cite{jegou2011pq,guo2020scann,gao2024rabitq,connor2025evp}
are lossless filters inside a pool in our measurements and do not change its
quality (\appkernel).

\section{Scope and limitations}\label{sec:limits}

\begin{enumerate}
\item \textbf{Vectors, not metric spaces.} The kernels need distances that
  decompose into inner products.
\item \textbf{Six corpora, three values of $n$.} The clustering results are
  correlational; the conjecture of Section~\ref{sec:pool} is not proved.
\item \textbf{One run per PiPNN configuration} in the membership studies (two at
  $10^8$, 120 leaves), and
  PiPNN's stock build varies by up to $\pm3\%$; the deterministic variant
  removes the variance at about $3\%$ of PiPNN's build time.
\item \textbf{QPS is not explained.} At equal distance counts Vamana's graphs
  run faster; we report distances as the cost and QPS alongside.
\item \textbf{hnswlib is measured as a lower bound} (level-0 graph, flat).
\item \textbf{The HSP/SAT recall-ceiling gap on GIST} ($0.9962$ against
  $0.9977$) is unexplained.
\item \textbf{Memory.} \dense{}'s heaps are $nC$ keys of 8 bytes, 160 GB at
  $10^8$; PiPNN's reservoir at 400 entries is 320 GB at $10^8$, and the
  deterministic 120-leaf build peaked at 679 GB.
\item \textbf{Determinism at $10^8$ is not bitwise.} The composed build is
  byte-identical across thread counts at $10^6$ (gated); two runs at $10^8$
  reproduce quality to the digit but differ in $38$ of $3.8\cdot10^{10}$
  candidates and $528$ of $4.1\cdot10^9$ edges. The source is not identified.
\item \textbf{Beyond $10^8$.} A localiser over $2\cdot10^7$ random leaders was
  run at $10^9$ (9.6 h); no graph is built past $10^8$.
\end{enumerate}

\section{Conclusion}\label{sec:conclusion}

A navigable graph can be built without search, deterministically, from
three separable stages, and the stage that decides its quality is the one
that costs least: the ending. Attached to the fastest published partition it
improves that partition's graphs on every corpus, and with enough
memberships per point it matches or beats a full dense construction at
$10^6$ to $10^8$ points and at $k=10$ and $k=100$, in half to nine tenths
of the time. Connectivity needs a spine, defined by the reachability it
guarantees, not by a construction; the instance we give, half-space-proximal
edges over a sample, needs no search, costs $|S|^2$ distances and routes
monotonically to every sampled point. What remains with the data is the pool: how many
memberships a corpus needs is written in the tail of its $k$NN graph where
neighbors do not know each other, that tail grows with $n$, and spending
memberships uniformly is as good as spending them where it is. The
construction, the patches that make PiPNN deterministic and exportable,
and every log are public.

\section*{Reproducibility}

All code, patches, scripts and logs are public. The construction is the
\texttt{fg} binary of \texttt{graft-ann}, branch \texttt{fgraft}
(\url{https://github.com/zevahcle/graft-ann/tree/fgraft}; flags
\texttt{-{}-dense}, \texttt{-{}-dense-m}, \texttt{-{}-dense-C},
\texttt{-{}-dense-prune}, \texttt{-{}-dense-cands}, \texttt{-{}-dense-m-file},
\texttt{-{}-dense-spine 3} with \texttt{-{}-dense-spine-s}, \texttt{-{}-dense-spine-cap},
\texttt{-{}-dense-spine-repair}). The PiPNN patch (counters, reservoir export,
determinism), every chain that produced a number in this paper, its log and
the findings file that reads it are at
\url{https://github.com/zevahcle/fgraft-experiments}.
Every \dense{} graph is gated by an adjacency hash and a distance count at 1
and 64 threads; every graph is graded by ParlayANN's searcher.

\appendix

\section{The dense kernel}\label{app:kernel}

The \dense{} candidate stage is one hot loop. Table~\ref{tab:kernel} is the
ladder from the first working version to the one used throughout; every
step produced a byte-identical GloVe graph, which is what made it safe to
climb. Four changes account for the $3.4\times$: the shared-leaf dedupe
became a bit mask per column block instead of a set intersection; the tile
reduction became a tree of horizontal adds with a vectorised epilogue; a
SIMD prefilter compares each tile's distances with both heaps' current worst
keys and discards, without a heap operation, offers the heap would reject
(exact, so the graph is unchanged); and the scan became symmetric, halving
the arithmetic, with a lock-scheduled round-robin tournament order replacing
barriered rounds that had left a quarter of the threads idle. Because heap
offers commute (Proposition~\ref{prop:det}), the tile size cannot change the
graph; it is chosen per leaf as the largest power of two in $[32,128]$ that
leaves at least 100 blocks, within $11\%$ of the measured optimum in every
case we swept. At $d=960$ the tile reaches only $1/7.7$ of OpenBLAS at the
same dimension (eight rows of 960 floats overflow L1 and there is no blocking
over the dimension), the largest remaining engineering headroom.

\begin{table*}[t]
\centering\small
\caption{Candidate-stage engineering, d0, 64 threads. \emph{Smoke}: 40
leaves of 25,000 points. Rates are ordered-pair equivalents; the machine's
SGEMM rate is $4.45$ G/s.}
\label{tab:kernel}
\begin{tabular}{@{}lrrr@{}}
\toprule
step & smoke & GloVe & G pairs/s \\
\midrule
one-sided scan, $2\times4$ AVX2 tile      & 29.1 s & 250.5 s & 0.71 \\
$+$ leaf masks, hadd tree, SIMD prefilter & 15.2 s & 125.3 s & 1.41 \\
symmetric scan, barrier per round         & 14.6 s & 101.7 s & 1.74 \\
$+$ two-sided prefilter, locked scheduler & 11.1 s & 77.8 s  & 2.28 \\
$+$ $TB=128$ (default)                    & \textbf{10.3 s} & \textbf{73.3 s} & \textbf{2.42} \\
\midrule
RTX 3060 Laptop (6 GB), FAISS fused $k$NN & --- & 22.2 s & \textbf{7.90} \\
\bottomrule
\end{tabular}
\end{table*}

\paragraph{On a GPU the stage is a library call} Exact $k$NN inside
independent blocks is the shape of a fused GEMM-plus-select kernel. With
FAISS's \texttt{knn\_gpu}~\cite{johnson2021gpu,faiss} on a 6 GB laptop card
and the prune and spine on the laptop's CPU, the GloVe build takes $54.2$ s
end to end against $640$ s for Vamana on the same laptop and $94.7$ s for the
CPU build on d0, at Vamana parity on Vamana's searcher ($9{,}131/13{,}447$
against $9{,}204/13{,}325$ at recall $0.95/0.97$). Vamana's per-core rate on
the laptop matches d0's exactly: a search builder is bounded by
dependent-access latency, which does not improve with core count. The GPU
$k$-means accumulates with scatter-add in unspecified order, so that path is
not bitwise reproducible.

\paragraph{A quantiser is a memory lever inside a pool} The 2-bit
quantisation of Connor, Dearle and Claydon~\cite{connor2025evp} keeps the $x$
largest $|u_i|$ of a unit vector as $\pm1$ and zeroes the rest. Globally, on
GloVe with $x = 50$, a float query against 2-bit data recovers $0.960$ of the
true 10-NN in its top 500 ($0.827$ at 100, $0.995$ at 2,000), and ternary
against ternary $0.828$. Inside a pool it is lossless: in every pool
configuration we measured on GloVe, the top 500 by this proxy recall the
pool's own exact ceiling to within one or two points. Whatever it loses, it
loses on the global problem, not inside a partition; it would cut the scan's
operand size by $16\times$, not change its quality.

\section{The spine in full}\label{app:spine}

\begin{table*}[t]
\centering\small
\caption{Clustered set (800 Gaussian clusters, $n=200{,}000$, $d=128$, \dense{}
L240/m5): recall@10 at $\mathit{ef}$ 64 / 128 on our searcher, by sample size
and degree cap. No spine $0.002$; SAT $0.987$ / $0.994$.\src{\texttt{adaptive-spine-FINDINGS.md} section 3}}
\label{tab:smoke}
\setlength{\tabcolsep}{3pt}
\begin{tabular}{@{}lrrrr@{}}
\toprule
$|S|$ & uncapped & cap 32 & cap 16 & cap 8 \\
\midrule
1k  & 0.531/0.611 & 0.531/0.611 & 0.467/0.561 & 0.237/0.310 \\
3k  & 0.865/0.949 & 0.865/0.949 & 0.756/0.903 & 0.441/0.599 \\
10k & 0.993/1.000 & 0.992/1.000 & 0.968/0.996 & 0.657/0.828 \\
30k & 1.000/1.000 & 1.000/1.000 & 0.998/1.000 & 0.808/0.918 \\
\bottomrule
\end{tabular}
\end{table*}

\begin{table*}[t]
\centering\small
\caption{Spine price on real data, \dq{0.95} / \dq{0.97}, \dense{} build, by
sample size and degree cap. No spine: GloVe $9{,}130/13{,}031$, GIST
$4{,}259/5{,}766$.\src{\texttt{adaptive-spine-FINDINGS.md} section 3}}
\label{tab:capfull}
\setlength{\tabcolsep}{3pt}
\begin{tabular}{@{}llrrrr@{}}
\toprule
& $|S|$ & uncapped & cap 32 & cap 16 & cap 8 \\
\midrule
GloVe & 1k  & 9,138/12,986 & 9,136/12,997 & 9,121/12,995 & 9,140/13,028 \\
      & 3k  & 9,165/12,998 & 9,159/13,016 & 9,184/13,042 & 9,139/13,052 \\
      & 10k & 9,337/13,216 & 9,198/13,076 & 9,172/13,057 & 9,181/13,069 \\
      & 30k & 9,420/13,707 & 9,295/13,210 & 8,961/13,117 & 9,163/13,063 \\
\midrule
GIST  & 1k  & 4,249/5,832 & 4,249/5,832 & 4,246/5,827 & 4,244/5,829 \\
      & 3k  & 4,289/5,903 & 4,290/5,900 & 4,287/5,890 & 4,300/5,888 \\
      & 10k & 4,331/5,896 & 4,296/5,854 & 4,268/5,817 & 4,268/5,822 \\
      & 30k & 4,501/6,138 & 4,330/5,896 & 4,298/5,864 & 4,283/5,799 \\
\bottomrule
\end{tabular}
\end{table*}

Tables~\ref{tab:smoke} and~\ref{tab:capfull} are the full grid behind
Section~\ref{sec:spine}. On real data the cap removes the price at every
sample size; on the clustered set it removes coverage at 8 and keeps it at
16 from $|S| = 10$k. HSP degree quantiles at $|S| = 100$k (uncapped): GloVe
median 81, p99 182, max 381; GIST median 32, p99 163, max 1,048. The
automatic rule's rounds on GIST with orphan repair stop at $|S| = 2{,}048$
after one doubling (999,918 of $10^6$ vertices reachable, 1,982 orphans
repaired); without the repair the reachable count moves by fewer than 150
vertices per doubling and the rule runs to its cap.

\section{Adaptive memberships}\label{app:adaptive}

\begin{table*}[t]
\centering\small
\caption{Two-level membership policies end to end (\dense{}, SAT spine):
the hardest fraction $f$ of points by the score gets $m_{hi}$, the rest
$m_{lo}$. \emph{A}: most cover at the uniform-$m=5$ pair budget; \emph{B}:
fewest pairs at its cover. \dq{0.95}/\dq{0.97} on ParlayANN's searcher.\src{\texttt{adaptive-spine-FINDINGS.md} section 2}}
\label{tab:adaptive}
\setlength{\tabcolsep}{4pt}
\begin{tabular}{@{}llrrr@{}}
\toprule
corpus & policy & mean $m$ & cand.\ G & \dq{0.95}/\dq{0.97} \\
\midrule
GloVe & uniform $m=5$ & 5.0 & 88.6 & 9,069/13,322 \\
      & oracle A ($f$ .3, 4/7) & 4.9 & 85.4 & 8,997/13,288 \\
      & pilot A & 4.9 & 85.4 & 9,031/13,293 \\
      & margin A & 4.9 & 84.8 & 9,046/13,403 \\
\midrule
GIST  & uniform $m=5$ & 5.0 & 26.4 & 4,485/6,055 \\
      & oracle A ($f$ .5, 4/6) & 5.0 & 26.3 & 4,430/5,956 \\
      & pilot A & 5.0 & 26.4 & 4,407/5,962 \\
      & oracle/pilot/margin B & 5.5 & 32.0--32.2 & 4,434--4,535/5,990--6,048 \\
\bottomrule
\end{tabular}
\end{table*}

The scores are the exact local clustering (oracle), the clustering of the
approximate graph a pilot $m=2$ pass produces (Spearman $0.896$ with the
oracle on GloVe, $0.677$ on GIST; its pairs are a subset of the final
build's, so it costs nothing extra), and the leader margin $d_1/d_2$. On the
sampled cover-versus-pairs frontier the best two-level policy at the uniform
budget saves $3$--$4\%$ of pairs on GloVe and nothing on GIST; aggressive
policies ($f \le 0.1$, $m_{hi}\ge 8$) lose, because a hard point's extra
leaves grow those leaves for every member (Table~\ref{tab:adaptive}).

\section{Throughput}\label{app:qps}

\begin{table*}[t]
\centering\small
\caption{QPS at recall $0.95$ ($k=10$), the runs of Table~\ref{tab:main},
ParlayANN's batch search on d0's 64 threads. \emph{hnswlib native}: its own
hierarchical search, interpolated on its $\mathit{ef}$ ladder.\src{\texttt{open-items-FINDINGS.md} (K), \texttt{framing-FINDINGS.md} (T1, T5)}}
\label{tab:qps}
\setlength{\tabcolsep}{3pt}
\begin{tabular}{@{}lrrrrrr@{}}
\toprule
corpus & \dense{} & Vamana & PiPNN & comp. & hnsw L0 & hnsw native \\
\midrule
GloVe     & 14.6k & 17.9k & 10.8k & 10.4k & 7.1k & 6.6k \\
SIFT      & 75k & 135k & 110k & 98k & 97k & 92k \\
GIST      & 2.1k & 3.3k & 3.2k & 3.0k & 1.7k & 2.8k \\
Deep-10M  & 65k & 81k & 63k & 59k & 50k & 41k \\
Wikipedia & 4.1k & 4.1k & 4.0k & 4.4k & 3.4k & 3.2k \\
Deep-100M & 38.6k & 37.8k & 33.3k & 35.2k & 32.7k & 17.1k \\
\bottomrule
\end{tabular}
\end{table*}

Vamana's graphs run faster than their distance counts suggest on GloVe,
SIFT and GIST (on GIST, $3.3$k QPS at $5{,}281$ evaluations against the dense
graph's $2.1$k at $4{,}268$); the difference closes at $d=1024$ (Wikipedia)
and at $10^8$, where the dense and composed graphs need enough fewer
evaluations to compensate (Table~\ref{tab:qps}). We have not measured the
cause (degree, memory layout). hnswlib's own hierarchical search is slower
than ParlayANN's search over its level-0 graph on five of six corpora at this
recall, so the flat grading of Table~\ref{tab:main} does not understate
hnswlib's throughput, only its distance count.

\begingroup\sloppy\hbadness=10000
\bibliographystyle{elsarticle-num}
\bibliography{dense}

\begin{thebibliography}{10}
\expandafter\ifx\csname url\endcsname\relax
  \def\url#1{\texttt{#1}}\fi
\expandafter\ifx\csname urlprefix\endcsname\relax\def\urlprefix{URL }\fi
\expandafter\ifx\csname href\endcsname\relax
  \def\href#1#2{#2} \def\path#1{#1}\fi

\bibitem{rubel2026pipnn}
T.~Rubel, R.~Wen, L.~Dhulipala, L.~Gottesb{\"u}ren, R.~Jayaram,
  J.~{\L}{\k{a}}cki, {PiPNN}: Ultra-scalable graph-based nearest neighbor
  indexing, arXiv:2602.21247 (2026).

\bibitem{malkov2020hnsw}
Y.~A. Malkov, D.~A. Yashunin, Efficient and robust approximate nearest neighbor
  search using hierarchical navigable small world graphs, IEEE TPAMI 42~(4)
  (2020) 824--836.

\bibitem{fu2019nsg}
C.~Fu, C.~Xiang, C.~Wang, D.~Cai, Fast approximate nearest neighbor search with
  the navigating spreading-out graph, PVLDB 12~(5) (2019) 461--474.

\bibitem{subramanya2019diskann}
S.~J. Subramanya, Devvrit, H.~V. Simhadri, R.~Krishnaswamy, R.~Kadekodi,
  {DiskANN}: Fast accurate billion-point nearest neighbor search on a single
  node, in: NeurIPS, 2019, pp. 13748--13758.

\bibitem{manohar2024parlayann}
M.~D. Manohar, Z.~Shen, G.~E. Blelloch, L.~Dhulipala, Y.~Gu, H.~V. Simhadri,
  Y.~Sun, {ParlayANN}: Scalable and deterministic parallel graph-based
  approximate nearest neighbor search algorithms, in: PPoPP, 2024, pp.
  270--285.

\bibitem{chavez2026graft}
E.~Ch\'avez, {Graft}: Navigable graphs grown on spatial-approximation
  rootstock, Information Systems, to appear (2026).

\bibitem{chavez2026wall}
E.~Ch\'avez, Batched feedback and the random-access wall in search-based graph
  construction, arXiv:2609.30493 (2026).

\bibitem{chen2009recursive}
J.~Chen, H.~ren Fang, Y.~Saad, Fast approximate {kNN} graph construction for
  high dimensional data via recursive {Lanczos} bisection, Journal of Machine
  Learning Research 10 (2009) 1989--2012.

\bibitem{wang2012scalable}
J.~Wang, J.~Wang, G.~Zeng, Z.~Tu, R.~Gan, S.~Li, Scalable $k$-{NN} graph
  construction for visual descriptors, in: Proc.\ IEEE Conference on Computer
  Vision and Pattern Recognition (CVPR), 2012.

\bibitem{munoz2019hcnng}
J.~A. Vargas~Mu{\~n}oz, M.~A. Gon{\c{c}}alves, Z.~Dias, R.~da~Silva~Torres,
  Hierarchical clustering-based graphs for large scale approximate nearest
  neighbor search, Pattern Recognition 96 (2019) 106970.

\bibitem{groh2023ggnn}
F.~Groh, L.~Ruppert, P.~Wieschollek, H.~P.~A. Lensch, {GGNN}: Graph-based {GPU}
  nearest neighbor search, IEEE Transactions on Big Data 9~(1) (2023).

\bibitem{ootomo2024cagra}
H.~Ootomo, A.~Naruse, C.~Nolet, R.~Wang, T.~Feher, Y.~Wang, {CAGRA:} highly
  parallel graph construction and approximate nearest neighbor search for gpus,
  in: 40th {IEEE} International Conference on Data Engineering, {ICDE} 2024,
  Utrecht, The Netherlands, May 13-16, 2024, {IEEE}, 2024, pp. 4236--4247.

\bibitem{cuvs}
{NVIDIA}, {cuVS}: {GPU}-accelerated vector search and clustering (all-neighbors
  $k$-nn graph construction), \url{https://github.com/rapidsai/cuvs} (2025).

\bibitem{li2025tagore}
Z.~Li, X.~Ke, Y.~Zhu, B.~Yu, B.~Zheng, Y.~Gao, Scalable graph indexing using
  {GPUs} for approximate nearest neighbor search, Proceedings of the ACM on
  Management of DataArXiv:2508.08744 (2025).
\newblock \href {https://doi.org/10.1145/3769825} {\path{doi:10.1145/3769825}}.

\bibitem{lloyd1982}
S.~P. Lloyd, Least squares quantization in {PCM}, IEEE Transactions on
  Information Theory 28~(2) (1982) 129--137.

\bibitem{navarro2002sat}
G.~Navarro, Searching in metric spaces by spatial approximation, The VLDB
  Journal 11~(1) (2002) 28--46.

\bibitem{pennington2014glove}
J.~Pennington, R.~Socher, C.~D. Manning, Glove: Global vectors for word
  representation, in: A.~Moschitti, B.~Pang, W.~Daelemans (Eds.), Proceedings
  of the 2014 Conference on Empirical Methods in Natural Language Processing,
  {EMNLP} 2014, October 25-29, 2014, Doha, Qatar, {A} meeting of SIGDAT, a
  Special Interest Group of the {ACL}, {ACL}, 2014, pp. 1532--1543.

\bibitem{jegou2011pq}
H.~J{\'{e}}gou, M.~Douze, C.~Schmid, Product quantization for nearest neighbor
  search, {IEEE} Trans. Pattern Anal. Mach. Intell. 33~(1) (2011) 117--128.

\bibitem{babenko2016deep1b}
A.~Babenko, V.~Lempitsky, Efficient indexing of billion-scale datasets of deep
  descriptors, in: IEEE Conference on Computer Vision and Pattern Recognition
  (CVPR), 2016, pp. 2055--2063.

\bibitem{chavez2005hsp}
E.~Ch{\'{a}}vez, S.~Dobrev, E.~Kranakis, J.~Opatrny, L.~Stacho, H.~Tejeda,
  J.~Urrutia, Half-space proximal: {A} new local test for extracting a bounded
  dilation spanner of a unit disk graph, in: J.~H. Anderson, G.~Prencipe,
  R.~Wattenhofer (Eds.), Principles of Distributed Systems, 9th International
  Conference, {OPODIS} 2005, Pisa, Italy, December 12-14, 2005, Revised
  Selected Papers, Vol. 3974 of Lecture Notes in Computer Science, Springer,
  2005, pp. 235--245.

\bibitem{malkov2014nsw}
Y.~Malkov, A.~Ponomarenko, A.~Logvinov, V.~Krylov, Approximate nearest neighbor
  algorithm based on navigable small world graphs, Inf. Syst. 45 (2014) 61--68.

\bibitem{harwood2016fanng}
B.~Harwood, T.~Drummond, {FANNG}: Fast approximate nearest neighbour graphs,
  in: CVPR, 2016, pp. 5713--5722.

\bibitem{hong1981io}
J.-W. Hong, H.~T. Kung, {I/O} complexity: The red-blue pebble game, in: STOC,
  1981, pp. 326--333.

\bibitem{williams2009roofline}
S.~Williams, A.~Waterman, D.~Patterson, Roofline: An insightful visual
  performance model for multicore architectures, Communications of the ACM
  52~(4) (2009) 65--76.

\bibitem{gottesburen2025partition}
L.~Gottesb{\"u}ren, L.~Dhulipala, R.~Jayaram, J.~{\L}{\k{a}}cki, Unleashing
  graph partitioning for large-scale nearest neighbor search, Proceedings of
  the VLDB Endowment 18 (2025).

\bibitem{chen2021spann}
Q.~Chen, B.~Zhao, H.~Wang, M.~Li, C.~Liu, Z.~Li, M.~Yang, J.~Wang, {SPANN}:
  Highly-efficient billion-scale approximate nearest neighbor search, in:
  Advances in Neural Information Processing Systems (NeurIPS), 2021, pp.
  5199--5212.

\bibitem{johnson2021gpu}
J.~Johnson, M.~Douze, H.~J{\'e}gou, Billion-scale similarity search with
  {GPUs}, IEEE Transactions on Big Data 7~(3) (2021) 535--547.

\bibitem{faiss}
M.~Douze, A.~Guzhva, C.~Deng, J.~Johnson, G.~Szilvasy, P.-E. Mazar\'e,
  M.~Lomeli, L.~Hosseini, H.~J\'egou, The faiss library, arXiv:2401.08281
  (2024).

\bibitem{dong2011nndescent}
W.~Dong, M.~Charikar, K.~Li, Efficient k-nearest neighbor graph construction
  for generic similarity measures, in: S.~Srinivasan, K.~Ramamritham, A.~Kumar,
  M.~P. Ravindra, E.~Bertino, R.~Kumar (Eds.), Proceedings of the 20th
  International Conference on World Wide Web, {WWW} 2011, Hyderabad, India,
  March 28 - April 1, 2011, {ACM}, 2011, pp. 577--586.

\bibitem{fu2016efanna}
C.~Fu, D.~Cai, {EFANNA} : An extremely fast approximate nearest neighbor search
  algorithm based on knn graph, arXiv:1609.07228 (2016).

\bibitem{sptag2018}
Q.~Chen, H.~Wang, M.~Li, G.~Ren, S.~Li, J.~Zhu, J.~Li, C.~Liu, L.~Zhang,
  J.~Wang, \href{https://github.com/Microsoft/SPTAG}{{SPTAG}: A library for
  fast approximate nearest neighbor search}, GitHub (2018).
\newline\urlprefix\url{https://github.com/Microsoft/SPTAG}

\bibitem{ono2023rnndescent}
N.~Ono, Y.~Matsui, Relative {NN-Descent}: A fast index construction for
  graph-based approximate nearest neighbor search, in: Proc.\ 31st ACM
  International Conference on Multimedia (MM), 2023.

\bibitem{yang2025revisiting}
S.~Yang, J.~Xie, Y.~Liu, J.~X. Yu, X.~Gao, Q.~Wang, Y.~Peng, J.~Cui, Revisiting
  the index construction of proximity graph-based approximate nearest neighbor
  search, Proceedings of the VLDB Endowment 18~(6) (2025) 1825--1838.

\bibitem{toussaint1980rng}
G.~T. Toussaint, The relative neighbourhood graph of a finite planar set,
  Pattern Recognit. 12~(4) (1980) 261--268.

\bibitem{arya1993rng}
S.~Arya, D.~M. Mount, Approximate nearest neighbor queries in fixed dimensions,
  in: SODA, 1993, pp. 271--280.

\bibitem{aumuller2021lid}
M.~Aum{\"u}ller, M.~Ceccarello, The role of local dimensionality measures in
  benchmarking nearest neighbor search, Information Systems 101 (2021) 101807.

\bibitem{chavez2001searching}
E.~Ch{\'{a}}vez, G.~Navarro, R.~A. Baeza{-}Yates, J.~L. Marroqu{\'{\i}}n,
  Searching in metric spaces, {ACM} Comput. Surv. 33~(3) (2001) 273--321.

\bibitem{guo2020scann}
R.~Guo, P.~Sun, E.~Lindgren, Q.~Geng, D.~Simcha, F.~Chern, S.~Kumar,
  Accelerating large-scale inference with anisotropic vector quantization, in:
  International Conference on Machine Learning (ICML), 2020, pp. 3887--3896.

\bibitem{gao2024rabitq}
J.~Gao, C.~Long, {RaBitQ}: Quantizing high-dimensional vectors with a
  theoretical error bound for approximate nearest neighbor search, Proceedings
  of the ACM on Management of Data 2~(3) (2024) 167.

\bibitem{connor2025evp}
R.~Connor, A.~Dearle, B.~Claydon, Equi-{V}oronoi polytopes: A geometric basis
  for 2-bit quantisation, in: Similarity Search and Applications (SISAP 2025),
  Vol. 16134 of Lecture Notes in Computer Science, Springer, 2025, pp.
  288--302.
\newblock \href {https://doi.org/10.1007/978-3-032-06069-3_23}
  {\path{doi:10.1007/978-3-032-06069-3_23}}.

\end{thebibliography}
\endgroup

\end{document}